\newcommand{\be}{\begin{equation}}
\newcommand{\en}{\end{equation}}
\newcommand{\ee}{\end{equation}}
\newcommand{\bea}{\begin{eqnarray}}
\newcommand{\ena}{\end{eqnarray}}
\newcommand{\beano}{\begin{eqnarray*}}
\newcommand{\enano}{\end{eqnarray*}}
\newcommand{\bee}{\begin{enumerate}}
\newcommand{\ene}{\end{enumerate}}
\newcommand{\N}{\mathfrak N}
\newcommand{\M}{\mathfrak M}
\newcommand{\mc}{\mathcal}
\newcommand{\E}{{\cal E}}
\newcommand{\F}{{\cal F}}
\newcommand{\1}{1 \!\! 1}
\newcommand{\Hil}{\mc H}
\newtheorem{thm}{Theorem}
\newtheorem{defn}[thm]{Definition}
\newenvironment{proof}{\noindent {\bf Proof}}{\hfill$\square$\vspace{3mm}\endtrivlist}
\newcommand{\kt}{\rangle}
\newcommand{\br}{\langle}
\def\ben{$$}
\def\een{$$}
\def\ba{\begin{array}{c}}
\def\ea{\end{array}}
\begin{document}

\thispagestyle{empty}

\vspace*{1cm}

\begin{center}
{\Large \bf Non linear pseudo-bosons versus hidden Hermiticity. II:
The case of unbounded operators }
\\[10mm]

{\large Fabio Bagarello}\\
  Dipartimento di Metodi e Modelli Matematici,
Facolt\`a di Ingegneria,\\ Universit\`a di Palermo, I-90128  Palermo, Italy\\
e-mail: fabio.bagarello@unipa.it\\
home page:
www.unipa.it/fabio.bagarello\\

\vspace{3mm}

and

 \vspace{3mm}

{\large Miloslav Znojil}\\
  Nuclear Physics Institute ASCR,
 250 68 \v{R}e\v{z}, Czech Republic\\
e-mail: znojil@ujf.cas.cz\\
home page:
http://gemma.ujf.cas.cz/$\sim$znojil\\

\vspace{3mm}

\end{center}

\vspace*{2cm}

\begin{abstract}
 \noindent
A close parallelism between the notions of nonlinear pseudobosons
and of an apparent non-Hermiticity of observables as shown in paper
I is demonstrated  to survive the transition to the quantum models
using unbounded metric in the so called standard Hilbert space of
states.

\end{abstract}

\vspace{2cm}

%{\bf PACS Numbers}:  .......

\vfill

%\pagenumbering{roman}

\newpage

\section{Introduction}

The core of the {\em difference} between the current bosonic and
fermionic quantum states is reflected by the respective number
operators with eigenvalues which may be any non-negative integer for
bosons and just zero or one for fermions. The most natural {\em
unification} of these states is being achieved under the notion of
supersymmetry \cite{Bagchi}. The latter concept finds a further
generalization in the models exhibiting the so called nonlinear
supersymmetry (NLSUSY, meaning, in essence \cite{Plyuschchay}, that
the anticommutator of the so called charges becomes equal to a
nonlinear polynomial function of the Hamiltonian) or, alternatively,
in the models composed of the so called nonlinear pseudobosons
(NLPB, \cite{bagpb1}).

There exists \cite{z3DI} a close relationship between the abstract
NLSUSY algebras and their representations in terms of certain
manifestly non-Hermitian operators (or, more explicitly
\cite{SIGMA}, cryptohermitian operators) of quantum observables with
real spectra. Remarkably enough, the latter observables may very
traditionally be selected as ordinary differential linear
Hamiltonians. In different context, their large subclass (called,
conveniently, ${\cal PT}-$symmetric Hamiltonians and sampled by the
Bessis' and Zinn-Justin's \cite{DB} $H = -\partial_x^2+{\rm i}x^3$)
has recently been made extremely popular by Carl Bender with
coauthors \cite{BB,Carl}).

In our preceding paper I \cite{bagzno} we demonstrated that there
also exists a similarly close connection between the {\em same}
class of the cryptohermitian Hamiltonian (of Hamiltonian-like)
operators $H\neq H^\dagger$ and the class of the generalized, NLPB
number operators $M\neq M^\dagger$. At the same time we felt it
rather unfortunate that the rigorous formulation of the expected
third possible connection between the NLPB systems and NLSUSY
algebras was still missing.

We saw one of the reasons in the emergence of a number of subtle
technical difficulties attributed to the unbounded-operator nature
of the Hamiltonians $H\neq H^\dagger$ which are needed in the NLSUSY
model building \cite{z1ZCBR}. As a consequence, our formulation of
the equivalence between the notions of the cryptohermiticity and
NLPB characteristics of quantum systems in paper I relied, heavily,
on the assumptions of the boundedness of the operators entering the
scene.

In particular, the latter constraint has been applied to the so
called metric operator $\Theta$ which enters the definition of the
inner product in the so called ``standard" physical Hilbert space of
states ${\cal H}^{(S)}$ (this notation has been introduced in
\cite{SIGMA}). Under such a constraint we followed the notation
conventions introduced in the series of recent papers by one of us
(F.B.) and spoke about the ``regular" NLPB systems in paper I.

In this context, our present paper II will start from an appropriate
weakening of the assumptions. This will enable us to formulate,
rigorously, the third, ``missing" connection between the NLPB
systems and NLSUSY algebras.

Our constructions will start from a systematic clarification of the
appropriate definitions. Firstly, the notion of the cryptohermitian
Hamiltonians will be left reserved for the class of bounded
operators $H\neq H^\dagger$. The phenomenologically inspired
consistency of the use of such a severely restricted class has been
advocated by Scholtz et al \cite{Geyer} who imagined that the
related simplification of the mathematics proves vital, in their
case of interest, for the practical feasibility of the
interacting-boson-model-inspired variational calculations of the
spectra of the heavy nuclei.

In the present context motivated by the needs of supersymmetry, the
overall situation is much less easy. First of all, one cannot
restrict one's attention to the bounded (i.e., in our notation,
cryptohermitian) Hamiltonians $H\neq H^\dagger$ anymore. In order to
reflect such an important change of perspective, we shall rechristen
the ``unbounded cryptohermitian" Hamiltonians $H\neq H^\dagger$ to
``quasi-Hermitian" Hamiltonians. Such a terminological aspect of the
problem  has also been discussed, after all, also in the
introductory part of our preceding paper I. In the present paper
such a terminological convention may find an independent and very
sound historical support in the introduction of such a name, by
Dieudonn\'{e} \cite{Dieudonne}, as early as in 1964.

Within the broadened perspective, the present usage of the name of
quasi-Hermitian Hamiltonians will be mostly accompanied by the
concrete selection of an ordinary differential linear Hamiltonian,
like the ${\cal PT}-$symmetric Hamiltonians cited above. Let us
remind the readers that  we have shown in paper I that the notions
of {\em regular non-linear pseudo-bosons} and {\em
cryptohermiticity} are, under certain sound assumptions, equivalent.
One of the assumptions used all along that paper is related to the
fact that the intertwining operator is bounded with bounded inverse
or, equivalently, that the two sets of eigenstates of $M$ and
$M^\dagger$ are Riesz bases. However, in the above-mentioned
physical applications (and many other ones)  this is not ensured at
all. In these cases the role of the unbounded operators becomes
crucial.

In the present paper we shall show that many of our previous results
can still be extended when the unbounded operators are involved. The
paper is organized as follows: in section \ref{II}  we shall return
to the notion of the ``hidden" Hermiticity \cite{SIGMA} and
distinguish, for our present purposes at least, between its form
called cryptohermiticity (in which one assumes that the operators
are bounded) and its generalized, unbounded-operator form which will
be called here, for the sake of definiteness, {\em
quasi-Hermiticity}. Subsequently we return to the definition of {\em
non-linear pseudo-bosons} (NLPB) and focus on the case in which
these cease to be regular. In such a setting we shall outline
parallels as well as differences between the results of paper I.
Section \ref{III} is then devoted to examples, while our conclusions
are given in Section \ref{IV}.

\section{Observables and metrics: bounded versus unbounded\label{II}}

\subsection{Cryptohermiticity versus quasi-Hermiticity}

Let us, once more, return to the above-mentioned unification of
bosons with fermions and recall the popular idea of their
arrangement into the so called supersymmetric multiplets. This idea
found a wide acceptance by the theoretical particle physicists
although, up to now, it does not seem supported by any persuasive
experimental evidence. This is the main ``hidden" reason why the
formalism has thoroughly been tested via the toy-model formalism of
the so called supersymmetric quantum mechanics (SUSYQM,
\cite{Witten}). The simplification proved suitable for the purpose.
For the sake of brevity one restricts there one's attention just to
a system composed of a single linear fermion in a combination with
an arbitrarily large $n-$plet of the linear bosons \cite{Khare}.

Fortunately, the subsequent study of SUSYQM found an independent and
fruitful motivation in its own, mostly purely formal byproducts.
{\em Pars pro toto} we might mention the development of the concept
of the shape invariance of solvable two-particle potentials, etc.

One of the other useful byproducts of the study of SUSYQM may be
seen, paradoxically, in it incompleteness as noticed by Jewicki and
Rodrigues \cite{JR}. On an abstract level this point may be
characterized as a sort of incompatibility between the analytic
implementation and the algebraic essence of the formalism. Indeed,
in the latter context one reveals that a {\em different}
angular-momentum-like parameter $\ell$ enters, in principle, the two
partner Hamiltonian-like operators via the centrifugal-like
interaction term $\sim \ell(\ell+1)/r^2$. In the former context, as
a consequence, one must very carefully discuss the boundary
conditions in the origin.

Fortunately, in the traditional SUSYQM of the textbooks, it is quite
easy to satisfy these $\ell-$dependent boundary conditions (and to
ignore the whole ``algebraic" shortcoming) by using simply a
brute-force suppression of the ``dangerous"  $\ell-$dependence of
the Hamiltonians in question. Roughly speaking, one simply decides
to restrict one's attention just to the special cases in which
$\ell(\ell+1)=0$ \cite{Bagchi}.

An unexpectedly successful alternative recipe of the extension of
the theory to all of the ``reasonable" real $\ell > -1/2$
(performing, in effect, its regularization) has been found in the
small-circle complexification of the coordinate $r$ near the origin
\cite{BG}. Such an origin-avoiding regularization of the
Schr\"{o}dinger equation breaks, naturally, the manifest Hermiticity
of the Hamiltonian and/or partner sub-Hamiltonians in question. For
this reason, one must be rather careful -- in our present paper we
shall return to the domain covered by the textbooks by using the
recipes as summarized rather briefly in Ref.~\cite{SIGMA} or in our
preceding paper I.

At this point it is important to emphasize that in the latter two
papers (as well as in their ``fathers-founders'" predecessor
\cite{Geyer}) the formalism of the so called ``cryptohermitian"
quantum mechanics is built upon the mathematics-simplifying
assumption that all of the operators entering the game are bounded.
We are now interested in discussing the mathematically more
sophisticated version of the formalism where the emphasis is being
shifted to the differential versions of the operators, with a number
of illustrative differential-equation examples as reviewed, say, in
long papers \cite{Carl,ali}.

For an incorporation of the related necessary weakening of the
assumptions let us first introduce the following

\begin{defn}\label{defnch}
Let us consider two operators $H$ and $\Theta$ acting on the Hilbert
space $\Hil$, with $\Theta$ self-adjoint,  positive and invertible.
Let us call $H^\dagger$ the adjoint of $H$ in $\Hil$ with respect to
its scalar product and introduce the conjugate operator
$H^\ddagger=\Theta^{-1}H^\dagger\Theta$, whenever it exists. We will
say that $H$ is quasi-Hermitian with respect to $\Theta$
(QHwrt$\Theta$) if $H=H^\ddagger$.
\end{defn}

\subsection{Quasi-Hermiticity versus the NLPB properties}

It is worth reminding the readers that we are interested in the case
in which $\Theta$ and $\Theta^{-1}$ are unbounded. Using standard
facts in functional calculus it is obvious that, in the assumptions
of Definition \ref{defnch} the operators $\Theta^{\pm 1/2}$ are well
defined. Hence we can introduce an operator
$h:=\Theta^{1/2}\,H\,\Theta^{-1/2}$, at least if the domains of the
operators allow us to do so. More explicitly, $h$ is well defined
if, taken $f\in D(\Theta^{-1/2})$, $\Theta^{-1/2}f\in D(H)$ and,
moreover, if $H\,\Theta^{-1/2}f\in D(\Theta^{1/2})$.

Of course, the latter requirements are surely satisfied if $H$ and
$\Theta^{\pm 1/2}$ are bounded. This option was considered in paper
I. Otherwise, due care is required, forcing us to introduce the
following, slightly modified terminology.

\begin{defn}\label{defnwellbeh}
Assume that  $H$ is QHwrt$\Theta$, for $H$ and $\Theta$ as above.
$H$ is {\em well behaved} wrt $\Theta$ if (i)
$h=\Theta^{1/2}\,H\,\Theta^{-1/2}$ exists and is self-adjoint,
$h=h^\dagger$; (ii) $h$ has only discrete eigenvalues $\epsilon_n$,
$n\in {\Bbb N}_0:={\Bbb N}\cup\{0\}$, with eigenvectors $e_n$:
$he_n=\epsilon_n e_n$, $n\in {\Bbb N}_0$, and (iii) if $\E:=\{e_n\}$
is an o.n. basis on $\Hil$.
\end{defn}

\vspace{2mm}

This definition is slightly different from that considered in paper
I, and it is more convenient in the present context where $\Theta$
is assumed to be unbounded. Similarly, the general notion of NLPB
should also incorporate the cases which are not regular.

\begin{defn}\label{defnlpb}
Given two operators $a$ and $b$ acting on Hilbert space $\Hil$ we
will say that the triple $(a,b,\{\epsilon_n\})$ such that
$\epsilon_0=0<\epsilon_1<\cdots<\epsilon_n<\cdots$ is a family of
NLPB if the following four properties hold:

\begin{itemize}

\item {\bf p1.}
a non zero vector $\Phi_0$ exists in $\Hil$ such that $a\,\Phi_0=0$
and $\Phi_0\in D^\infty(b)$.

\item {\bf { p2}.}
a non zero vector $\eta_0$ exists in $\Hil$ such that
$b^\dagger\,\eta_0=0$ and $\eta_0\in D^\infty(a^\dagger)$.

\item {\bf { p3}.} calling
 \be
  \Phi_n:=\frac{1}{\sqrt{\epsilon_n!}}\,b^n\,\Phi_0,\qquad
\eta_n:=\frac{1}{\sqrt{\epsilon_n!}}\,{a^\dagger}^n\,\eta_0,
\label{55}
 \en
we have, for all $n\geq0$, $\Phi_n\in D(a)$, $\eta_n\in
D(b^\dagger)$ and
 \be
 a\,\Phi_n=\sqrt{\epsilon_n}\,\Phi_{n-1},\qquad
b^\dagger\eta_n=\sqrt{\epsilon_n}\,\eta_{n-1}. \label{56}\en

\item {\bf { p4}.}
The sets $\F_\Phi=\{\Phi_n,\,n\geq0\}$ and
$\F_\eta=\{\eta_n,\,n\geq0\}$ are bases of $\Hil$.

%\item {\bf { p5}.} $\F_\Phi$ and $\F_\eta$ are Riesz bases of $\Hil$.

\end{itemize}

\end{defn}

The definitions in (\ref{55}) are well posed in the sense that,
because of {\bf p1} and {\bf p2}, the vectors $\Phi_n$ and $\eta_n$
are well defined vectors of $\Hil$ for all $n\geq0$ \cite{bagnlpb}.
In paper I we further assumed that $\F_\Phi$ and $\F_\eta$ are Riesz
bases of $\Hil$. Under such a constraint we called our NLPB  {\em
regular} (NLRPB). Now, we will consider the fully general case in
which the latter condition is {\em not} satisfied. For the sake of
brevity of our discussion we shall, at the same time, skip the not
too interesting possibility of having the multiplicity
$m(\epsilon_n)$ of some eigenvalues $\epsilon_n$ greater than one.

Definition \ref{defnwellbeh} above will then imply that the set $\E$
produces a resolution of the identity which we write in the bra-ket
language as
 $$\sum_{n=0}^\infty |e_n\kt\,\br e_n|=\1.$$
Proceeding further in a close parallel with paper I let us now
introduce the manifestly not self-adjoint operators
 \be
  M=b\,a,\qquad \M=M^\dagger=a^\dagger b^\dagger.
 \label{57}
 \en
We can check that
 $$\Phi_n\in D(M)\cap D(b)\cap D(a)\,, \ \ \
 \eta_n\in D(\M)\cap D(a^\dagger)\cap D(b^\dagger),$$
as well as
 \be
 b\,\Phi_n=\sqrt{\epsilon_{n+1} }\,\Phi_{n+1},
 \qquad
 a^\dagger\eta_n=\sqrt{\epsilon_{n+1} }\,\eta_{n+1}, \label{58}
 \en
 \be
 M\Phi_n=\epsilon_n\Phi_n,\qquad \M\eta_n=\epsilon_n\eta_n
 \label{59}
 \en
which follow from definitions (\ref{55}) and from (\ref{56}).
Incidentally, this does not automatically imply that, for instance,
$D(a)$ is exactly the linear span of the $\Phi_n$'s, $D_\Phi$, but
only that  $D(a)\supseteq D_\Phi$. The eigenvalue equations
themselves imply that the vectors in $\F_\Phi$ and $\F_\eta$ are
mutually orthogonal,
 \be
 \left<\Phi_n,\eta_m\right>=\delta_{n,m}, \label{510}
 \en
having fixed the normalization of $\Phi_0$ and $\eta_0$ in such a
way that $\left<\Phi_0,\eta_0\right>=1$. Recalling \cite{bagnlpb} we
remind the readers that conditions  \{{\bf p1}, {\bf p2}, {\bf p3},
{\bf p4}\} are equivalent to  \{{\bf p1}, {\bf p2}, {\bf p3}$'$,
{\bf p4}\}, where

\vspace{2mm}

{\bf {\bf p3}$'$.} The vectors $\Phi_n$ and $\eta_n$ defined in
(\ref{55}) satisfy (\ref{510}).

\vspace{2mm}

 \noindent
Let us now complement $M$ and $\M$ by a pair of further operators
 \be
 N:=a\,b, \qquad \N:=N^\dagger=b^\dagger a^\dagger.
 \label{511}
 \en
It is easy to check that $\Phi_n\in D(N)$, $\eta_n\in D(\N)$, and
that $N\Phi_n=\epsilon_{n+1}\Phi_n$ and
$\N\eta_n=\epsilon_{n+1}\eta_n$, for all $n\geq0$. If the sequence
$\{\epsilon_n\}$ diverges for diverging $n$ it is clear that the
operators involved here, $a$, $b$, $M$, $N$ and so on, are
unbounded. Moreover, as already discussed in the Introduction, also
the intertwining operator between $M$ and $M^\dagger$, see below,
will turn out to be unbounded, contrarily to what happens in paper
I. For this reason we will pay particular attention to this aspect
of our construction.

To begin with, let us define an operator $X$ on a certain dense
domain $D(X)$ as follows: $D(X)\ni f\rightarrow
Xf:=\sum_{k=0}^\infty\left<\eta_n,f\right>\Phi_n$. The set $D(X)$
contains, for instance, all the vectors of $\F_\Phi$, whose linear
span is dense in $\Hil$: hence the norm closure of $D(X)$ is all of
$\Hil$, so that $X$ is well defined. Now, for all $f\in D(X)$ and
for all $m\geq0$, we see that $\left<\eta_m,(Xf-f)\right>=0$.
Therefore, since $\F_\eta$ is complete in $\Hil$, $Xf=f$. In other
words, $X$ is the identity operator on $D(x)$ and it can be extended
to all of $\Hil$. Then, using Dirac's bra-ket notation, we can write
 \be
 \sum_n|\Phi_n\kt\,\br \eta_n|=\sum_n|\eta_n\kt\,\br \Phi_n|=\1.
 \label{512}
 \en
Let us now define two more operators, $S_\Phi$ and $S_\eta$, on
their domains $D(S_\Phi)$ and $D(S_\eta)$, by letting
$h=\sum_n\left<\Phi_n,h\right>\eta_n$ be in $D(S_\Phi)$ and setting
 \be S_\Phi h=
 \sum_n\left<\Phi_n,h\right>\Phi_n.\label{515}
 \en
Analogously, let $f=\sum_n\left<\eta_n,f\right>\Phi_n$ be in
$D(S_\eta)$. Then we define
 \be S_\eta f=
 \sum_n\left<\eta_n,f\right>\eta_n.\label{516}
 \en
In the Dirac's notation this means that
$S_\Phi:=\sum_n|\Phi_n\kt\,\br \Phi_n|$ and
$S_\eta:=\sum_n|\eta_n\kt\,\br \eta_n|$. It is clear that both these
operators are densely defined. Indeed, calling as before $D_\Phi$
and $D_\eta$ respectively the linear spans of $\F_\Phi$ and
$\F_\eta$, we see that $D_\Phi\subseteq D(S_\eta)$ and
$D_\eta\subseteq D(S_\Phi)$. In particular,
 \be
 S_\eta \Phi_n=\eta_n, \qquad S_\Phi \eta_n=\Phi_n, \label{517}
 \en
for all $n\geq0$. The last equations have an interesting
consequence: since  $\F_\Phi$ and $\F_\eta$ are not Riesz
bases\footnote{Recall that this is the situation we are interested
in, here. The case in which these are Riesz bases was already
considered in paper I}, $S_\eta$ and $S_\Phi$ are necessarily
unbounded operators. This means that they cannot be considered, or
called, {\em frame operators}, as in our previous papers, since in
standard frame theory the frame operator is necessarily bounded with
bounded inverse. It is also easy to check that they are both
positive definite, $\left<h, S_\Phi h\right>>0$, $\left<f, S_\eta
f\right>>0$, for all non zero $h\in D(S_\Phi)$ and $f\in D(S_\eta)$,
and that they are one the inverse of the other:
 \be
 S_\eta=S^{-1}_\Phi.\label{518}
 \en
For that, we
have to prove that, if $f\in D(S_\eta)$, then $S_\eta f\in
D(S_\Phi)$ and $S_\Phi(S_\eta f)=f$, and that, if $h\in D(S_\Phi)$,
then $S_\Phi h\in D(S_\eta)$ and $S_\eta(S_\Phi h)=h$.

Let $f\in D(S_\eta)$ be a norm-limit $f=\|.\|-\lim_{N \to\infty}f_N$
of $f_N=\sum_{k=0}^N\left<\eta_k,f\right>\Phi_k$, with $S_\eta f_N$
converging uniformly in $\Hil$ to what we call $(S_\eta f)$. In
other words, both $\{f_N\}$ and $\{S_\eta f_N\}$ are $\|.\|$-Cauchy
sequences. To check that $(S_\eta f)$ belongs to $D(S_\Phi)$ it is
enough to check that $\{S_\Phi(S_\eta f_N)\}$ is a $\|.\|$-Cauchy
sequence as well. This is true since $S_\Phi(S_\eta f_N)=f_N$ for
all $N$, which is a  $\|.\|$-Cauchy sequence by assumption,
converging to $f$. This concludes half of what we had to prove. The
proof of the other half is similar.

A direct computation finally shows that
$D(S_\eta^\dagger)=D(S_\eta)$, $D(S_\Phi^\dagger)=D(S_\Phi)$,  and
that $S_\eta=S_\eta^\dagger$ and $S_\Phi=S_\Phi^\dagger$.

\vspace{2mm}

{\bf Remark:} An apparently simpler definition of $S_\eta$ and
$S_\Phi$ would consist in fixing their domains to be exactly
$D_\Phi$ and $D_\eta$, respectively. This is equivalent to a
restriction of the operators considered so far. However, this choice
is not appropriate for us since, in particular, it is not clear if
for instance $D(S_\Phi)=D(S_\Phi^\dagger)$ \cite{Dieudonne}.
Nevertheless, similar restrictions will be quite useful in the next
section.

\subsection{Relating $M$ and $M^\dagger$ for non-regular NLPB}

We are now interested in deducing a relation between $M$ and
$M^\dagger$ using the operators $S_\Phi$ and $S_\eta$. The starting
point is the eigenvalue equation $M\Phi_n=\epsilon_n\Phi_n$,
together with the equality $\eta_n=S_\eta\Phi_n$ obtained before.
Hence $M\Phi_n\in D(S_\eta)$ and we have that
$S_\eta(M\Phi_n)=\epsilon_n\eta_n$, for all $n\geq0$. This equation
implies also that $\eta_n\in D(S_\eta MS_\Phi)$, and that, for all
$n\geq0$,
 \be
  \left(S_\eta MS_\Phi-\M\right)\eta_n=0.\label{519}
  \en
This equation, by itself, is not enough to ensure that $S_\eta
M\Phi_n=\M$. We know (see \cite{hal}, Problem 50) that for an
unbounded operator $A$, the validity of equation $Ae_n=0$ for all
vectors $e_n$ of a basis still does not imply, in general, that
$A=0$\footnote{In order to be so, we should have $Ae_n=0$ for all
bases!}. In other words, even if it is rather reasonable to imagine
that $\left(S_\eta M\Phi_n-\M\right)\eta_n=0$ implies that $S_\eta
M\Phi_n=\M$, this is not guaranteed at all. For this reason, as
already anticipated in the previous remark, we define the following
restrictions:
 \be
 M_0=M\upharpoonright_{D_\Phi}, \quad
N_0=N\upharpoonright_{D_\Phi},\quad \M_0=\M\upharpoonright_{D_\eta},
\quad \N_0=\N\upharpoonright_{D_\eta}. \label{521}
 \en
For these operators we can prove that
 \be
 S_\eta M_0 S_\Phi=\M_0,
\qquad M_0^\dagger=\M_0,\label{522}
 \en
as well as
 \be S_\eta N_0 S_\Phi=\N_0, \qquad
N_0^\dagger=\N_0.\label{523}
 \en
Indeed we can check that, for instance, $D(\M_0)=D(S_\eta M_0
S_\Phi)=D_\eta$ and that the operators $\M_0$ and $S_\eta M_0
S_\Phi$ coincide on $D_\eta$. Therefore, for these restrictions,
formulas analogous to those found in paper I are recovered.

%\subsection{}

The following theorem, which extends to non regular NLPB an
analogous result proven in paper I, can now be deduced:

\begin{thm}
Let  $H$ be well behaved wrt $\Theta$, with $\Theta=\Theta^\dagger$
unbounded, positive and invertible. Then it is possible to introduce
two operators $a$ and $b$ on $\Hil$, and a sequence of real numbers
$\{\epsilon_n,\,n\in {\Bbb N}_0 \}$, such that the triple
$(a,b,\{\epsilon_n\})$ is a family of non regular NLPB.

Vice-versa, if  $(a,b,\{\epsilon_n\})$ is a family of non regular
NLRB, two operators can be introduced, $H$ and $\Theta$, such that
$\Theta=\Theta^\dagger$ is unbounded, positive and invertible, and
$H$ is well behaved wrt $\Theta$.
\end{thm}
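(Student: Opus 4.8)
The plan is to prove the two implications separately, in each case transporting the structure between the reference Hilbert space (where $h=h^\dagger$ lives together with its orthonormal eigenbasis $\E=\{e_n\}$) and $\Hil$ by means of the unbounded but positive and invertible operators $\Theta^{\pm1/2}$. The bridge in both directions is the pair of identifications $\Phi_n=\Theta^{-1/2}e_n$ and $\eta_n=\Theta^{1/2}e_n$, which is precisely what makes the metric $\Theta$ coincide with the operator $S_\eta$ constructed above.

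For the first implication, assume $H$ is well behaved wrt $\Theta$, so that $h=\Theta^{1/2}H\Theta^{-1/2}$ is self-adjoint with discrete eigenvalues $\epsilon_n$ and orthonormal eigenbasis $\E=\{e_n\}$. On $\E$ I would introduce the standard ladder operators $Ae_n=\sqrt{\epsilon_n}\,e_{n-1}$ and $A^\dagger e_n=\sqrt{\epsilon_{n+1}}\,e_{n+1}$, so that $A^\dagger A e_n=\epsilon_n e_n=he_n$. I would then set $a:=\Theta^{-1/2}A\Theta^{1/2}$, $b:=\Theta^{-1/2}A^\dagger\Theta^{1/2}$ on the natural domains, and $\Phi_n:=\Theta^{-1/2}e_n$, $\eta_n:=\Theta^{1/2}e_n$. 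A direct computation then gives $a\Phi_n=\sqrt{\epsilon_n}\,\Phi_{n-1}$, $b\Phi_n=\sqrt{\epsilon_{n+1}}\,\Phi_{n+1}$, $b^\dagger\eta_n=\sqrt{\epsilon_n}\,\eta_{n-1}$ and $M=ba=\Theta^{-1/2}h\,\Theta^{1/2}=H$ on $D_\Phi$. Properties \textbf{p1}--\textbf{p3} follow at once (with $\epsilon_0=0$ forcing $a\Phi_0=0$ and $\Phi_0\in D^\infty(b)$), while \textbf{p4} follows from applying the invertible $\Theta^{\pm1/2}$ to the basis $\E$; crucially, since $\Theta^{\pm1/2}$ are unbounded, $\F_\Phi$ and $\F_\eta$ fail to be Riesz bases, which is exactly the non regular case.

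For the converse, given a non regular family $(a,b,\{\epsilon_n\})$ I would take $\Theta:=S_\eta$ and $H:=M_0=M\restr_{D_\Phi}$. By the construction of $S_\eta,S_\Phi$, the operator $\Theta=\Theta^\dagger$ is positive, invertible (with $\Theta^{-1}=S_\Phi$) and unbounded (because $\F_\Phi,\F_\eta$ are not Riesz bases), and the quasi-Hermiticity $M_0=\Theta^{-1}M_0^\dagger\Theta$ is exactly (\ref{522}). To see that $H$ is well behaved I would put $e_n:=\Theta^{1/2}\Phi_n$; biorthogonality (\ref{510}) gives $\langle e_n,e_m\rangle=\langle\Phi_n,S_\eta\Phi_m\rangle=\langle\Phi_n,\eta_m\rangle=\delta_{n,m}$, so $\E=\{e_n\}$ is orthonormal and, $\Theta^{1/2}$ being invertible on a basis, complete. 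Finally $h=\Theta^{1/2}H\Theta^{-1/2}$ satisfies $he_n=\Theta^{1/2}M_0\Phi_n=\epsilon_n e_n$, so $h$ admits the orthonormal eigenbasis $\E$ with discrete real eigenvalues and is therefore the self-adjoint operator $\sum_n\epsilon_n|e_n\kt\,\br e_n|$, establishing the conditions of Definition~\ref{defnwellbeh}.

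The main obstacle is not the algebra but the domain bookkeeping forced by unboundedness. I expect the delicate points to be: (i) verifying that the formal adjoints, e.g. $b^\dagger=\Theta^{1/2}A\Theta^{-1/2}$, are genuine Hilbert-space adjoints and not merely formal expressions, which requires matching $D(b^\dagger)$ with $D(\Theta^{1/2}A\Theta^{-1/2})$; and (ii) making sure that all intertwining and self-adjointness identities are asserted only at the level of the restrictions $M_0,\M_0,N_0,\N_0$ of (\ref{521}). Indeed, as stressed around (\ref{519}), an eigenvector relation holding on a single basis does not pin down an unbounded operator, so the equalities $M=H$ and $h=\sum_n\epsilon_n|e_n\kt\,\br e_n|$ can be claimed safely only after restricting to $D_\Phi$ (respectively to $\E$) and then, where needed, passing to closures. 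Carrying these extension and closure issues consistently through both directions is where the real work lies.
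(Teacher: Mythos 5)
Your proposal is correct and follows essentially the same route as the paper: both directions hinge on the identifications $\Phi_n=\Theta^{-1/2}e_n$, $\eta_n=\Theta^{1/2}e_n$ and, for the converse, on taking $\Theta=S_\eta$, $H=M_0$ and $e_n=S_\eta^{1/2}\Phi_n$ with orthonormality coming from biorthogonality. The only (cosmetic) difference is that you introduce $a,b$ by conjugating standard ladder operators with $\Theta^{\pm1/2}$, whereas the paper defines them directly by their ladder action on $D_\Phi$, which sidesteps some of the domain matching you rightly flag as the delicate point.
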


\begin{proof}

The proof is slightly different from that given for bounded
operator, so that we will give it here.

First, we assume that $H$ is
well behaved wrt $\Theta$, where $\Theta=\Theta^\dagger$ is an
unbounded, positive and invertible operator. Of course, our
hypotheses imply that (i) $H^\ddagger:=\Theta^{-1}H^\dagger\Theta$
is well defined and coincides with $H$; (ii) that
$h=\Theta^{1/2}H\Theta^{-1/2}$ is also well defined, and
self-adjoint; (iii) that $\E$ is an o.n. basis of eigenvectors of
$h$, with eigenvalues $\{\epsilon_n\}$, of $\Hil$: $he_n=\epsilon_n
e_n$, for all $n\geq0$.

Therefore, $\Theta^{1/2}H\Theta^{-1/2}e_n=\epsilon_n e_n$, $e_n\in
D(\Theta^{-1/2})$; consequently, $H(\Theta^{-1/2}e_n)=\epsilon_n
(\Theta^{-1/2}e_n)$. This suggests to define the vectors
$\Phi_n:=\Theta^{-1/2}e_n$, which belong to $D(H)$ and satisfy the
eigenvalue equation $H\Phi_n=\epsilon_n\Phi_n$. Since $h=h^\dagger$,
we can  repeat the same considerations starting from $h^\dagger$.
Hence, defining $\eta_n:=\Theta^{1/2}e_n$, we deduce that $\eta_n\in
D(H^\dagger)$ and that $H^\dagger\eta_n=\epsilon_n\eta_n$. The sets
$\F_\Phi:=\{\Phi_n,\,n\geq0\}$ and $\F_\eta:=\{\eta_n,\,n\geq0\}$
can be proven to be bases of $\Hil$. Indeed, let us take a vector
$f\in D(\Theta^{1/2})$, such that $f$ is orthogonal to all the
vectors in $\F_\eta$. Therefore we have, for all $n\geq0$,
 $$
0=\left<f,\eta_n\right>=\left<f,\Theta^{1/2}e_n\right>
=\left<\Theta^{1/2}f,e_n\right>,
 $$
which implies that $\Theta^{1/2}f=0$, so that $f=0$ as well. Using
standard results, see \cite{han} for instance, we conclude that all the
elements of $\Hil$ can be expanded in terms of $\F_\eta$, which is
therefore a basis of all of $\Hil$. Analogously, we can check that
$\F_\Phi$ is a basis of $\Hil$. However, due to the fact that
$\Theta^{\pm1/2}$ are unbounded, $\F_\eta$ and $\F_\Phi$ are not
Riesz bases.

Let us now define two operators $a$ and $b$ on $D(a)=D(b):=D_\Phi$
as follows: let $f=\sum_{k=0}^Nc_k\Phi_k$ be a generic vector in
$D_\Phi$. Then
 \be
a\,f:=\sum_{k=1}^Nc_k\sqrt{\epsilon_k}\Phi_{k-1},\qquad
b\,f:=\sum_{k=0}^Nc_k\sqrt{\epsilon_{k+1}}\Phi_{k+1}.
 \label{525}
 \en
In particular these imply that
$a\,\Phi_n=\sqrt{\epsilon_n}\,\Phi_{n-1}$ and that
$b\,\Phi_n=\sqrt{\epsilon_{n+1}}\,\Phi_{n+1}$, for all $n\geq0$.
Now, recalling that $\epsilon_0=0$, we deduce that $a\Phi_0=0$.
Also, iterating the raising equation above, we find that
$\Phi_n:=\frac{1}{\sqrt{\epsilon_n!}}\,b^n\,\Phi_0$, which implies,
in particular, that $\Phi_0\in D^\infty(b)$. Hence condition {\bf
p1} of Definition \ref{defnlpb} is satisfied.

To check condition {\bf p2} we first have to compute $a^\dagger$ and
$b^\dagger$. It is possible to check that, for all $n\geq0$,
$\eta_n\in D(a^\dagger)\cap D(b^\dagger)$, and that \be
a^\dagger\eta_n=\sqrt{\epsilon_{n+1}}\,\eta_{n+1},\qquad
b^\dagger\eta_n=\sqrt{\epsilon_{n}}\,\eta_{n-1}, \label{526}\en so
that, clearly, $b^\dagger\eta_0=0$ and, again acting iteratively,
$\eta_n\in D^\infty(a^\dagger)$. In fact, we find that
$\eta_n:=\frac{1}{\sqrt{\epsilon_n!}}\,{a^\dagger}^n\,\eta_0$.
Condition {\bf p3$'$} is clearly true, while condition {\bf p4} was
already proved.

\vspace{2mm}

Let us now prove the converse implication, that is, let us see how
NLPB  produce two operators, $H$ and $\Theta$, satisfying Definition
\ref{defnwellbeh}.

This is a consequence of equation (\ref{522}), $S_\eta M_0
S_\Phi=\M_0$, which we can rewrite as $M_0=S_\eta^{-1}M_0^\dagger
S_\eta$. Hence, the operators $H$ and $\Theta$ in Definition
\ref{defnch} are easily identified: $H$ is $M_0$, while $\Theta$ is
$S_\eta$, and $M_0$ is QHwrt$S_\eta$. With this in mind the operator
$h$ becomes $h=S_\eta^{1/2}M_0S_\eta^{-1/2}$. First of all, we need
to understand if $h$ is well defined. For that, recalling the
properties of $S_\eta$ and using the spectral theorem, we deduce
that $S_\eta^{\pm 1/2}$ are well defined.

Let us now observe that, if $f\in D(S_\eta)$, then $f\in
D(S_\eta^{1/2})$. This follows from the equality $\left<f,S_\eta
f\right>=\|S_\eta^{1/2}f\|^2$. Analogously, if $h\in
D(S_\eta^{-1})$, then $h\in D(S_\eta^{-1/2})$. Therefore, since
$\Phi_n\in D(S_\eta)$, $\Phi_n\in D(S_\eta^{1/2})$ as well, so that
we can define new vectors of $\Hil$ as $e_n:=S_\eta^{1/2}\Phi_n$,
$n\geq0$. Notice that $e_n\in D(S_\eta^{1/2})\cap D(S_\eta^{-1/2})$.
In fact we have: $S_\eta^{1/2}e_n=S_\eta \Phi_n=\eta_n$, and
$S_\eta^{-1/2}e_n=\Phi_n$. It follows that $e_n\in D(h)$ and that
$he_n=\epsilon_ne_n$. Standard arguments, \cite{han}, finally show
that the linear span of $\E:=\{e_n\}$ is dense in $\Hil$, showing in
this way that $h$ is well defined. Finally, we can also check from
the definition that $\left<e_n,e_m\right>=\delta_{n,m}$: $\E$ is an
o.n. basis of $\Hil$. It is now clear that $h=h^\dagger$.

\end{proof}

We want to briefly consider few consequence of this theorem, which
are very similar to those found in paper I.
\begin{enumerate}

\item
The Dirac's representations of the operators introduced so far can
again be easily deduced. Thus, we have
 \be a=\sum_{n=0}^\infty
\sqrt{\epsilon_n}|\Phi_{n-1}\kt\,\br \eta_n|,\quad
b=\sum_{n=0}^\infty \sqrt{\epsilon_{n+1}}|\Phi_{n+1}\kt\,\br
\eta_n|. \label{expr}
 \ee
We can also deduce the similar expansions for $a^\dagger$ and
$b^\dagger$ and for
$$h=\sum_{n=0}^\infty \epsilon_n|e_{n}\kt\,\br e_n|,\quad H
=\sum_{n=0}^\infty \epsilon_n|\Phi_{n}\kt\,\br \eta_n|, \mbox{ and }
H^\dagger=\sum_{n=0}^\infty \epsilon_n|\eta_{n}\kt\,\br \Phi_n|.$$
 \item
As in paper I, operators $S_\eta$ and $S_\Phi$, and their square
roots, behave as intertwining operators. This is exactly the same
kind of result we have deduced for {\em regular} pseudo-bosons,
where biorthogonal Riesz bases and intertwining operators are
recovered. For instance, equation (\ref{522}) produces the following
intertwining relation: $S_\Phi M_0=\M_0 S_\Phi$.
 \item
Even if $h$ is not required to be factorizable, it turns out that it
can still be written as $h=b_\Theta a_\Theta$, where
$a_\Theta=\Theta^{1/2}a\Theta^{-1/2}$ and
$b_\Theta=\Theta^{1/2}b\Theta^{-1/2}$. We can write
$[a_\Theta,b_\Theta]=\Theta^{1/2}[a,b]\Theta^{-1/2}\neq [a,b]$, but
if $\left[[a,b],\Theta^{1/2}\right]=0$, which is the case for linear
pseudo-bosons. Thus, hamiltonian $h$ can be written in a factorized
form, at a formal level at least.

\end{enumerate}

\section{Non-regular NLPB in differential-operator realizations \label{III}}

This section will be divided in two parts, with the first one
offering a physical motivation and background of what will be
discussed in the second subsection.

\subsection{Nonlinear supersymmetries}

\subsubsection{Antilinear operators}

In a historical perspective and in the context of physics and
quantum theory the emergence of the pair of non-selfadjoint
factorized operators (\ref{57}) may be traced back to
Ref.~\cite{z1ZCBR}. In this letter the usual form of  supersymmetric
quantum mechanics (in which one traditionally assumes that $M =
M^\dagger$ \cite{Khare}) has been generalized. In our present
language the idea of Ref.~\cite{z1ZCBR} (cf. also its presentation
in a broader context in Ref.~\cite{z3DI}) may be characterized as
lying in the use of nonlinear regular pseudo-bosons. Indeed, in the
approach of Ref.~\cite{z1ZCBR} using $M \neq M^\dagger$ the
supersymmetry connecting bosons with fermions has been realized in
the representation space spanned by states defined by
Eq.~(\ref{59}).

The quantum system presented in Ref.~\cite{z1ZCBR} may be recalled
here as our first illustration of the immediate applicability of the
general NLRPB formalism in the very concrete physical and
phenomenologically oriented situations. Firstly, following the
notation of Ref.~\cite{z1ZCBR} we have to define the {\em pair} of
the factorized sub-Hamiltonian operators $M=
M^{(\pm)}=B^{(\pm)}A^{(\pm)}$ where the quadruplet of the factors
may be chosen in the form of linear differential operators
 \be
A^{(\pm)} = \frac{d}{dx} + W^{(\pm)}(x), \ \ \ \ \ \ \ \ \ \ \
B^{(\pm)} =- \frac{d}{dx} + W^{(\pm)}(x)\,. \label{creanih}
 \ee
Once we fix a real constant $\varepsilon>0$ and select, for the sake
of definiteness,
 \be
  W^{(\pm)}(x) = \pm \left [
 \frac{1}{x\pm i\varepsilon}-
i\,(x\pm i\varepsilon)^2 \right ]\  \label{susyp}
 \ee
the main result is the validity of the refactorization
 \be
M^{(+)} = {\cal T} A^{(-)} B^{(-)} {\cal T} \  .\label{promis}
 \ee
The complex-conjugation antilinear operator ${\cal T}$ can be
interpreted as mimicking the time-reversal operation performed over
the system.

The readers are recommended to find more details (e.g., the relevant
older references and/or a generalization of the ansatz (\ref{susyp})
in {\em loc. cit.}). It is worth adding that the transition to
non-hermitian interactions makes the model truly inspiring. Its
structure may be perceived as an immediate predecessor of the
introduction of the abstract concept of pseudo-bosons in
Ref.~\cite{tri} where also an immediate follow-up preprint \cite{z2}
has been cited.

A few years later a further, so called tobogganic generalization of
the whole formalism has been proposed and summarized, say, in the
recent compact review paper \cite{z4acta}. The core of the
generalization lied in the Riemann-surface-adapted generalization of
the operator ${\cal T} \neq {\cal T}^{-1} $. Due to the
circumstances one must set
 \ben
  A=
 -{\cal T}\frac{d}{dx} + {\cal T}W^{(-)}(x)\,,\ \ \
 B=\frac{d}{dx}{\cal T}^{-1} + W^{(-)}(x){\cal T}^{-1}
 \,
 \label{reredefinition}
 \een
i.e., one must redefine further the creation- and annihilation-like
operators of Eq.~(\ref{creanih}).

\subsubsection{Regularizations
by complexifications}

Among the illustrative textbook quantum systems of SUSYQM a special
role is played by the one-dimensional harmonic-oscillator
Schr\"{o}dinger equation
 \be
 \left (-\,\frac{d^2}{dr^2} + r^2 \right )
 \, \psi(r) =
 E  \, \psi(r), \ \ \ \ \ \psi(r) \in L_2(-\infty,\infty)\,.
 \label{regho}
 \ee
In Ref.~\cite{ptho} this example found a natural PT-symmetric
two-parametric generalization in the so called Kratzer's harmonic
oscillator
 \be
  \left (-\,\frac{d^2}{dx^2}
+ \frac{G}{(x-ic)^2} + x^2 -2ic\,x - c^2\right )
 \, \varphi(x) =
E  \, \varphi(x), \ \ \ \ \ \varphi(x) \in L_2(-\infty,\infty).
 \label{SE}
 \ee
Here the real constant $c\neq 0$ regularizes the centrifugal-like
spike at any coupling strength $G = \alpha^2-1/4$ so that the wave
functions may be defined as living on the whole real line. The
parameter $\alpha$ should be chosen positive and, in the simpler,
non-degenerate case, non-integer. This implies~\cite{ptho} that the
complete set of normalizable eigenfunctions may be numbered by the
quasi-parity $q = \pm 1 $ and by the excitation  $ n = 0, 1, 2,
\ldots$. At the respective $c-$independent bound-state energies
 \be
 E=E_{qn}=4n+2 + 2 q \alpha\,
 \label{55a}
 \ee
wave functions become defined, in closed form, in terms of Laguerre
polynomials $L^{(\gamma)}_n (z)$,
  \be
\varphi(x) = const. \,(x-ic)^{-q \alpha+1/2}e^{-(x-ic)^2/2} \ L^{(-q
\alpha)}_n \left [
 (x-ic)^2
 \right ]\,.
\label{waves}
 \ee
Naturally, the new spectrum of energies is not equidistant, though
it is still real and composed of the two equidistant subspectra.

Although the reality of the energies (\ref{55a}) of the states
(\ref{55}) themselves (possessing, in addition, the so called
unbroken PT symmetry \cite{BB}) seemed to be in contrast with the
manifest non-Hermiticity of the underlying operators $M$, the puzzle
has been clarified in Ref.~\cite{z4ps}. We were able to show there
that our apparently non-Hermitian model (\ref{SE}) generating the
real spectrum of energies  may be re-interpreted as self-adjoint.
For this purpose we showed, in~\cite{z4ps}, that the inner product
may be modified in such a way that the induced norm remains positive
definite. We also showed that in spite of the immanent ambiguity of
such ``hidden-Hermiticity-mediating" changes of the inner product,
one of the most natural definitions of a unique inner product may be
based on the use of ``quasi-parity" \cite{z4ps} (which is now better
known as ``charge" \cite{Carl}).

\subsubsection{The implementation of supersymmetry}

In the ultimate stage of development of the SUSYQM construction as
presented in Ref.~\cite{z5} we were able to describe one of the most
natural deformations of the structure of the creation- and
annihilation-operator algebra. Its detailed form followed from the
$c \neq 0$ regularization of the singular harmonic oscillator of
Eq.~(\ref{SE}) where the regularized Hamiltonian will be denoted by
the superscripted bracket symbol $ {H}_{}^{[\alpha]}$ in what
follows.

Our construction just paralleled the standard supersymmetrization of
the current, regular harmonic oscillator (\ref{regho}) (cf.
\cite{Khare} for details).  Firstly, we replaced the above-proposed
cubic-oscillator toy-model superpotential of Eq.~(\ref{susyp}) by
its harmonic-oscillator alternative
 \be
  W^{(\gamma)}(x) =
 x- ic
- \frac{\gamma+1/2}{x- ic}\,,\ \ \ \ \ c > 0\,
  \label{susypsi}
 \ee
with any real $\gamma$. Secondly, in the manner compatible with the
supersymmetric recipe yielding the two (viz., ``left" and ``right",
$\gamma-$numbered) families of quantum Hamiltonians
 \be
 H_{(L)}^{(\gamma)}=B\, A=\hat{p}^2+W^2-W'
 , \ \ \ \ \ \ \ \
 H_{(R)}^{(\gamma)}=A \, B=\hat{p}^2+W^2+W'
 \label{partner}
 \ee
we verified that
 \be
 {H}_{(L)}^{(\gamma)} = {H}_{}^{[\alpha]} -2\gamma-2,
 \ \ \ \ \ \
 {H}_{(R)}^{(\gamma)} = {H}_{}^{[\beta]} -2\gamma
 \label{Mtt}
 \ee
where ${\alpha}=|\gamma|$ and $\beta=|\gamma+1|$, respectively.
Further details may be found in Refs.~\cite{z5} and \cite{z6}.

\subsection{Non-regular pseudo-bosons}

\subsubsection{The hidden Lie-algebraic structures}

In ref. \cite{z5} we revealed the existence of the {\em
second-order} differential operators
 \ben
A^{(-\gamma-1)} \, A^{(\gamma)} = A^{(\gamma-1)} \, A^{(-\gamma)}
= {\bf A}(\alpha) \label{operatorsa}
 \een
 \ben
B^{(-\gamma)} \, B^{(\gamma-1)}= B^{(\gamma)} \, B^{(-\gamma-1)}
= {\bf B} (\alpha) \label{operatorsb}
 \een
which acted as the true respective annihilation and creation
operators in our spiked and complex harmonic-oscillator model,
 \ben
 {\bf A}(\alpha) \,
 {\cal L}^{(\gamma)}_{N+1}{}=c_5(N,\gamma)\,
 {\cal L}^{(\gamma)}_{N}{}
,\ \ \ \ \ \  {\bf B}(\alpha) \,
 {\cal L}^{(\gamma)}_{N}{}=c_5(N,\gamma)\,
 {\cal L}^{(\gamma)}_{N+1}{}
 \een
where $ c_5(N,\gamma)=-4\sqrt{(N+1)(N+\gamma+1)}$. The corresponding
generalization of the pseudobosonic version of the Heisenberg
algebra has been shown, in Ref.~\cite{z7pariz},  for the Lie algebra
$sl(2,I\!\!R)$ with the renormalized generators $ {\bf
A}(\alpha)/\sqrt{32}$, ${\bf B}(\alpha)/\sqrt{32}$ and
$H^{(\alpha)}/4$ and with the commutators
 \ben
  {\bf A}(\alpha) \,
   {\bf B}(\alpha) \,-\, {\bf B}(\alpha) \,
    {\bf A}(\alpha) =8\,H^{[\alpha]}
    \een
    and
 \ben
  {\bf A}(\alpha) \,H^{(\alpha)}\,-
   \,H^{(\alpha)}\, {\bf A}(\alpha)
   \,\equiv\,4\, {\bf A}(\alpha), \ \ \
 H^{(\alpha)}\,{\bf B}(\alpha)\, -\,
   {\bf B}(\alpha) \,H^{(\alpha)} \,\equiv\,4\, {\bf B}(\alpha)\,.
    \een
%In this new setting the alternative SUSY-algebraic structure emerged
%when we replaced the first-order differential operators $A$, $B$ and
%the two second-order sub-Hamiltonians $H_{(L/R)}$ by the respective
%second-order differential operators ${\bf A}(\alpha)$ and ${\bf
%B}(\alpha)$ and by the two new, fourth-order differential operators
%$\bf { G}_{(L/R)}=(H^{[\alpha]} \mp 2)^2-4\alpha^2$, respectively.

%The latter doublet of operators still remained compatible with the
%definitions of the original eigenvectors. Indeed, the same
%eigenvectors proved also obtainable as solutions of the following
%alternative Schr\"{o}dinger-like differential equations of the
%fourth order,
% \ben
% {\bf G}_{(L)}
% \,   \left |N^{(\gamma)} \right \rangle
%= \Omega_N^{(\gamma)} \, \left |N^{(\gamma)} \right \rangle\, , \ \
%\ \ \ \ \ \ \,
% {\bf G}_{(R)}
%\,
% \left |N^{(\gamma)} \right \rangle=
%\Omega_{N+1}^{(\gamma)} \,
% \left |N^{(\gamma)} \right \rangle\,.
% \een
%with $  \ \Omega_N^{(\gamma)} :=16\,N\,(N+\gamma)$.

\subsubsection{Reinterpretation}

The operators ${\bf A}(\alpha)$ and ${\bf B}(\alpha)$, and the
functions ${\cal L}^{(\gamma)}_{N}$, allow us to construct a non
trivial example of NLPB satisfying {\bf p1}-{\bf p4} of Definition
\ref{defnlpb}. For that we begin to define two operators $a:=-{\bf
A}(\alpha)$ and $b=-{\bf B}(\alpha)$, a countable family of vectors
$\Phi_n:={\cal L}^{(\gamma)}_{n}$ and the following sequence of
non-negative numbers: $\epsilon_n=c_5(n+1,\gamma)^2$. Then
$a\Phi_n=\sqrt{\epsilon_n}\Phi_{n-1}$ and
$b\Phi_n=\sqrt{\epsilon_{n+1}}\Phi_{n+1}$. Let now $\hat\Hil$ be the
closure of the linear span of the vectors $\Phi_n$'s, which, in
general, is a proper subset of $\Hil$. $\hat\Hil$ is a Hilbert
space, in which a unique biorthogonal basis $\F_\eta=\{\eta_n\}$,
$\left<\eta_n,\Phi_m\right>=\delta_{n,m}$, can be introduced,
\cite{chri}. The first vector of this biorthogonal set, $\eta_0$,
satisfies condition {\bf p2} of Definition \ref{defnlpb}. Indeed we
have
 $$
 \left<b^\dagger\eta_0,\Phi_k\right>=\left<\eta_0,b\,\Phi_k\right>
 =\sqrt{\epsilon_{k+1}}\left<\eta_0,\Phi_{k+1}\right>=0,
 $$
for all $k\geq0$. Hence, being $\F_\Phi$ complete,
$b^\dagger\eta_0=0$. To check now that $\eta_0$ belongs to
$D^\infty(a^\dagger)$, we consider the following scalar product:
$\left<a^k\Phi_n,\eta_0\right>$, which is zero whenever $k>n$ due to
the lowering property of $a$ on the set $\F_\Phi$. On the other
hand, if $k\leq n$, we deduce that
$\left<a^k\Phi_n,\eta_0\right>=\sqrt{\epsilon_k!}\delta_{n,k}$.
Therefore, since $\delta_{n,k}=\left<\Phi_n,\eta_k\right>$, we
deduce that
 $\left<\Phi_n,\left(\frac{1}{\sqrt{\epsilon_k!}}\,
 (a^\dagger)^k\eta_0-\eta_k\right)\right>=0$
for all $n\geq0$. Hence, using once more the completeness of
$\F_\Phi$, we deduce that
$\eta_k=\sqrt{\epsilon_k!}\,(a^\dagger)^k\eta_0$ for all $k$: this
shows that $\eta_0\in D^\infty(a^\dagger)$ and that the various
vectors of the unique biorthogonal basis $\F_\eta$ introduced above
are related as in equation (\ref{55}). To fulfill all the
requirements of Definition \ref{defnlpb} we finally have to prove
that $b^\dagger\eta_n=\sqrt{\epsilon_n}\eta_{n-1}$. The proof goes
like this: for all $k\geq0$,
 $$
 \left<b^\dagger\eta_n,\Phi_k\right>=\left<\eta_n,b\,\Phi_k\right>
 =\sqrt{\epsilon_n}\delta_{n-1,k}
 =\sqrt{\epsilon_n}\,\left<\eta_{n-1},\Phi_k\right>,
 $$
so that
 $\left<\left(b^\dagger\eta_n
 -\sqrt{\epsilon_n}\eta_{n-1}\right),\Phi_k\right>=0$
for all $k$. Since $\F_\Phi$ is complete, this proves that $b$ is a
lowering operator for $\F_\eta$, as required. We conclude that NLPB
can be constructed out of this model, but they are not, in general,
regular, due to the fact that the operator $S_\eta$ mapping
$\F_\Phi$ to $\F_\eta$ is, in general, unbounded. Incidentally, we
also deduce that $\Phi_n$ and $\eta_n$ are respectively eigenstates
of $H^{(\alpha)}$ and ${H^{(\alpha)}}^\dagger$ with the same
eigenvalue, $\frac{1}{8}(\epsilon_{n+1}-\epsilon_n)$. This is
connected to the fact that these operators are related to the
operators $M=ba$ and $M^\dagger=a^\dagger b^\dagger$ (and to their
{\em specular} counterparts $N=ab$ and $N^\dagger=b^\dagger
a^\dagger$) introduced in Section II.

\section{Conclusions \label{IV}}

The key motivation of our present NLPB-related studies I and II was
twofold. Firstly, in a series of papers \cite{bagrev} - \cite{abg}
one of us (F.B.) considered the canonical commutation relations
$[a,b]=\1$ in a generalized version in which $b$ was not necessarily
equal to $a^\dagger$. In parallel, in another series of papers (cf.,
e.g., their most recent samples \cite{chebypol} - \cite{laguerre}),
the second one of us (M.Z.) studied the possibility of the weakening
of the Hermiticity of the observables in a few quantum systems of an
immediate phenomenological appeal and/or methodical interest.

In paper I we announced the possibility of connecting these two
alternative points of view. In particular we addressed the problem
while simplifying its technical aspects by the acceptance of the
operator-boundedness assumptions as currently made in the physics
literature \cite{Geyer}. This enabled us to clarify the role of the
metric (specifying the inner products in the correct Hilbert space
of states) from the NLPB point of view, and {\em vice versa}. We
also endorsed the message of Refs. \cite{Geyer} and \cite{SIGMA} by
re-recommending the practical use of the factorizations of the
metrics $\Theta$ into the individual Dyson-map factors $\Omega$.

Later on we consulted several less accessible mathematics-oriented
references (e.g., \cite{szym}) and imagined that there exist many
situations in physics (with some of them being cited above) in which
the picture provided by the bounded operators appears insufficient.
For this reason we returned to the subject of paper I. In its
present continuation we incorporated the above-mentioned new
knowledge and perspective into a necessary weakening of the
underlying mathematical assumptions.

In paper II we revealed, first of all, that in the territory of
unbounded operators the functional structure obtained from $a$ and
$b$, and from the so-called pseudo-bosons related to these, may be
much richer than the one described in paper I. Still, many ideas of
paper I survived the generalization. In particular, in the presently
specified {\em quasi-Hermitian} case we still succeeded in the
clarification of the conditions under which one can still work with
the NLPB formalism where the two biorthogonal bases remain
obtainable as eigenstates of the two {\em number-like} operators,
$M$ and $M^\dagger$, with eigenvalues which are not equal to
integers in general.

In a certain synthesis of our originally separate starting positions
we showed that even in the quasi-Hermitian context with unbounded
operators the doublet of the generalized number operators $M$ and
$M^\dagger$ (and of $N$ and $N^\dagger$ as well) may still be
perceived as interconnected by an intertwining operator. The latter
intertwinner has been shown specified using the two sets of
eigenstates. At this point we made an ample use of the extended
notion of pseudo-bosons in which their essential characteristics
play the role. In this sense we believe that the role of the
generalized number operators might acquire more and more relevance
in the future applications of the formalism where the boundedness of
the operators of the observables cannot be guaranteed and where, in
addition, the Hermiticity of these operators is ``hidden".

We dare to believe that our present results might encourage a
further growth of interest in the practical use of quasi-Hermitian
operators of observables in applied quantum theory. Keeping in mind
the presence of many obstacles and mathematical puzzles in this
field, we expect that the present clarification of at least some of
them might re-encourage the mathematically sufficiently rigorous
further search for the representations of quantum systems in which
the inner product in the standard Hilbert space in nontrivial. In
particular, our present results might encourage a return to all of
those constructions where the physical inner products proved
formally represented by unbounded metric operators $\Omega$ while
their formal factorization $\Theta=\Omega^\dagger\Omega$ into
``microscopic" Dyson maps would still be difficult to perform.

\section*{Acknowledgements}

F.B. acknowledges M.I.U.R. for financial support. M.Z. acknowledges
the support by the GA\v{C}R grant Nr. P203/11/1433, by the M\v{S}MT
``Doppler Institute" project Nr. LC06002 and by the Institutional
Research Plan AV0Z10480505.

\end{document}